\newcommand{\iccom}[1]{\todo[color=blue!30]{IC: #1}}
\newtheorem{theorem}{Theorem}
\newtheorem{definition}{Definition}
\newtheorem{lemma}[theorem]{Lemma}
\newtheorem{corollary}[theorem]{Corollary}
\newtheorem{remark}{Remark}
\newtheorem{claim}{Claim}
\newtheorem{example}{Example}
\DeclareMathOperator*{\argmin}{arg\,min}
\DeclareMathOperator*{\E}{\mathbb{E}}
\title{An impossibility result for strongly group-strategyproof multi-winner approval-based voting}
\author{Ioannis Caragiannis\thanks{Aarhus University, email: \url{iannis@cs.au.dk}} \and Rob LeGrand\thanks{Angelo State University, email: \url{rob.legrand@angelo.edu}} \and Evangelos Markakis\thanks{Athens University of Economics and Business \& Input Output Global, email: \url{markakis@gmail.com}} \and Emmanouil Pountourakis\thanks{Drexel University, email: \url{manolis@drexel.edu}}}
\begin{document}
% Title page for title and abstract only.
\maketitle

% Abstract. Note that this must come before \maketitle.
\begin{abstract}
Multi-winner approval-based voting has received considerable attention recently. A voting rule in this setting takes as input ballots in which each agent approves a subset of the available alternatives and outputs a committee of alternatives of given size $k$. We consider the scenario when a coalition of agents can act strategically and alter their ballots so that the new outcome is strictly better for a coalition member and at least as good for anyone else in the coalition. Voting rules that are robust against this strategic behaviour are called strongly group-strategyproof. We prove that, for $k\in \{1,2, …, m-2\}$, strongly group-strategyproof multi-winner approval-based voting rules which furthermore satisfy the minimum efficiency requirement of unanimity do not exist, where $m$ is the number of available alternatives. Our proof builds a connection to single-winner voting with ranking-based ballots and exploits the infamous Gibbard-Satterthwaite theorem to reach the desired impossibility result. Our result has implications for paradigmatic problems from the area of approximate mechanism design without money and indicates that strongly group-strategyproof mechanisms for minimax approval voting, variants of facility location, and classification can only have an unbounded approximation ratio. 
\end{abstract}

% Paper body
\section{Introduction}
%P1: general about approval voting
{\em Approval voting} offers a simple
and easy-to-use format for running elections on multiple issues
with binary domains. This may involve either committee elections among a set of candidates (which is the viewpoint adopted in this work), or elections for a set of topics that need to be decided upon simultaneously, often referred to as
multiple referenda. Under such a voting format, the voters are allowed to express approval preferences, i.e., each voter can specify an approval set, with as many candidates as she likes, with the interpretation that these are the candidates she is happy with, whereas all other candidates are disapproved by her. In the last decades, many scientific
societies and organizations have adopted approval voting for their council elections, including, among many others, the American Mathematical Society and the Game Theory Society. 
%We refer to \cite{LS23} for a more complete treatment of approval voting elections and their applications.
%and the European Association for Theoretical Computer Science (EATCS).

%P2 Resistance to manipulation
We are interested in the strategic behavior of voters in approval-based elections. Namely, our work is centered around the design of mechanisms for selecting a committee of a given (a priori fixed) size, which falls under the broader umbrella of {\em mechanism design without money}. To begin with, the first goal one would pursue is to obtain {\em strategyproof mechanisms}, where no voter has an incentive to unilaterally misreport her approval preferences to her own benefit. The famous Gibbard-Satterthwaite theorem \cite{Gib73,Sat75} is not applicable in the approval voting domain, as it requires a different space of preferences that express a strict ranking over the set of candidates. Some known extensions of this theorem to weak rankings ---e.g., a result by ~\citet{B93}--- are not applicable to approval voting either. Thus, it is feasible to have strategyproof mechanisms (that are not necessarily dictatorial), as already pointed out by \citet{CKM10}. As an example, for committees of size $k$, the {\em minisum rule}, which selects the candidates with the $k$ highest approval scores, is strategyproof subject to appropriate tie-breaking.

%P3 - resilience to coalitional deviations
Going beyond strategyproofness, several works have also examined coalitional manipulations, where a group of voters may benefit from a joint deviation. This gives rise to two distinct notions of resilience, depending on how we interpret what it means for a coalition to benefit from a joint deviation. In particular, a standard class is that of {\em weakly group-strategyproof} (in short, weakly GSP, or sometimes simply referred to as GSP) mechanisms, where no deviation can make all members of a coalition strictly better off. A stronger definition is that of strongly GSP mechanisms where no deviation makes one member of the coalition better off without worsening the other members.
%P4 Our focus: strongly GSP
Our interest here is in the latter notion of {\em strong group-strategyproofness}. Even though the definitions of weak and strong group-strategyproofness may not seem to differ significantly at first sight, the current literature conveys a different picture. There are positive results on the existence of weakly GSP mechanisms with desirable properties (e.g., satisfying some form of economic efficiency or approximating some social cost objective), both in voting \cite{CKM10}, but also in other fields of mechanism design without money, such as in facility location problems \cite{PT13}. On the contrary, many of these results do not generalize for strongly GSP mechanisms, and in some cases it has remained an open problem whether comparable results are possible with strongly GSP mechanisms. This is precisely the focus of our work.

\subsection{Our Results} 
We consider the problem of selecting a committee of a given fixed size, in elections where voters submit approval ballots. Our main result is in Section \ref{sec:impossibility}, where we show that unanimous, strongly GSP mechanisms do not exist. Unanimity is a very simple and minimal form of efficiency, requiring that if all voters approve the same committee, this should also be the outcome of the election. We view this as a severe impossibility result for strongly GSP mechanisms. Our proof is based on a construction that exploits the Gibbard-Satterthwaite theorem; even though we are interested in approval-based elections, we essentially provide a reduction from appropriately defined elections with ranking preferences, where the non-existence of strategyproof mechanisms implies the impossibility of strongly GSP mechanisms in our setting. We feel that our proof technique is of independent interest and could have further applications.
As a corollary, our result also yields an incompatibility between Pareto-efficiency and strong group-strategyproofness under the approval voting format. 
Finally, we stress that our results are not based on any computational complexity assumption and, hence, hold for exponential-time mechanisms as well.

We proceed in Section \ref{sec:applications}, by demonstrating some further implications of our main result. Most importantly, we resolve one of the open questions posed by \citet{CKM10}, concerning the {\em minimax approval voting} rule, proposed by~\citet{BKS07} as an alternative to the minisum rule. Namely, we prove that there is no strongly GSP mechanism that achieves a finite approximation for the minimax solution. Within the voting domain, we also obtain an impossibility result for {\em participatory budgeting} under approval ballots. Beyond voting, we also have implications for other related problems. The first one is a special case of {\em binary classification}, where we seek a classifier that labels a fixed number of points as positive and the remaining ones as negative. Again, our main result translates to a lack of finite approximations under strongly GSP mechanisms. Finally, the last implication is for {\em constrained facility location}, where the facilities can be placed only in particular locations. 

\subsection{Further Related Work}
For a more complete exposition of approval voting elections and their applications, we refer to the book of~\citet{LS23}. Regarding resistance to manipulation, there are several positive results for strategyproof mechanisms that generalize to weakly GSP mechanisms. In fact, this connection has been studied more thoroughly by \citet{LZ09} and \citet{BBM10}, where sufficient conditions have been identified for domains in which strategyproof mechanisms coincide with weakly GSP mechanisms. Regarding strongly GSP mechanisms in settings without monetary payments, the literature is rather scarce. We are aware of characterization results for the case of two alternatives by \citet{M12} and by \citet{BBM11}. In particular, for approval elections with two alternatives, Manjunath identifies essentially two voting rules that are strongly GSP and Pareto-efficient. On a slightly different direction, the work of \citet{BBM11} establishes characterizations based on certain monotonicity properties. Finally, in terms of impossibility results, \citet{FL+17} show that there are no anonymous, strongly GSP mechanisms for a facility location problem, where the goal is to place a facility on a line, under double-peaked preferences. This is a quite different domain however, and does not have any implications for our voting setting. 

\section{Definitions}
We begin by introducing our notation. We let $N$ denote a set of $n$ voters (or {\em agents}) and $A$ a set of $m$ candidates (or {\em alternatives}).
We consider elections where both the preferences and the ballots are approval-based, i.e., an agent's ballot expresses his approval for each alternative in the subset under consideration.
Hence, a voting profile $P$ is a tuple $P=(P_1, \ldots, P_n)$, where $P_i\subseteq A$ denotes the declared preference of agent $i$.
%Note that we can also represent the preferences of voters as binary vectors in $\{0, 1\}^m$, where a $1$ in the $i$-th coordinate indicates approval of the $i$-th alternative. 
%We will mostly stick to the set representation, but when convenient, we will switch to the binary vector representation.\vmcom{probably we do not need this} 
%(as in Section~\ref{sec:manip}). 
Under this notation, an {\em approval election} is specified by a tuple $(N, A, P)$. Our focus is on multi-winner elections for selecting a committee of some predetermined size. We let $k$ denote the committee size. So, for any voting rule in our setting, the outcome must be some set $S\subseteq A$, with $|S|=k$. To distinguish with the rules discussed later in Section~\ref{sec:rankings}, we refer to them as {\em multi-winner approval-based voting rules}. 

%There are several voting rules that can be defined under the approval-based format, for committee elections; see \cite{Kilgour10} for a review. Arguably the most commonly used method is candidate-wise majority, referred to as the {\em minisum} rule. When there is no cardinality constraint on the size of the committee, minisum selects the candidates that are approved by a majority of voters. When the committee needs to be of a fixed size $k$, then minisum simply selects the $k$ most approved candidates. 

We extend the notion of {\em Hamming distance} to subsets of $A$ as follows. We say that the distance between two sets
$Q$ and $T$ is the total number of alternatives in which they differ, i.e.,
\[d(Q,T) = |Q\setminus T|+|T\setminus Q|=|Q|+|T|-2|Q\cap T|.\]
Note that this is precisely the Hamming distance of the sets when represented as binary vectors, where the $i$th coordinate of each vector equals $1$ if the $i$th alternative belongs to the set and equals $0$ otherwise.

\subsection{Resistance to manipulation and other properties}
We will assume that agents are strategic and may misreport their preferences to the voting rule if this can be in their interest. We assume that each agent prefers committees that include as many alternatives from their approval preference as possible. Equivalently, agents evaluate a committee in terms of its Hamming distance to their approval preference (with the smaller Hamming distance, the better).

Hence, it is desirable for voting rules to be resistant to possible manipulations by the agents. Given a profile $P$ and a voting rule $R$, we denote by
$R(P)$ the outcome of the voting rule on profile $P$. We also denote by $P_{-i}$ the preferences of all agents besides $i$.
Hence, we can also write $P$ as $(P_i, P_{-i})$. Our first property indicating resistance to manipulation is {\em strategyproofness}, which requires that no agent $i$ has an incentive to unilaterally change her preference so as to reduce the distance of $P_i$ from the outcome of the voting rule.
\begin{definition}
\label{def:SP}
	A voting rule $R$ is {\em strategyproof} (SP) if for any election $(N,A,P)$, any agent $i\in N$, and any set of alternatives $P_i'\subseteq A$, it holds that
	$$ d(P_i, R(P_i, P_{-i})) \leq d(P_i, R(P_i', P_{-i})).$$
\end{definition}

Moving on, we can also consider deviations by coalitions of agents and define two analogous notions of resistance to manipulation. In particular, the first such notion says that there is no deviation that can make all deviating agents strictly better off. Given a profile $P$, in analogy to $P_i$ and $P_{-i}$, we let $P_S$ and $P_{-S}$ denote the preferences of a set of agents $S$ and 
 of all agents except $S$, respectively.

\begin{definition}
	A voting rule $R$ is {\em weakly group-strategyproof} (weakly GSP) if for any election $(N,A,P)$ and any coalition $S\subseteq N$ of agents, there is no profile $P_S'$ of the agents in $S$ such that
	$$ d(P_i, R(P_S, P_{-S})) > d(P_i, R(P_S', P_{-S}))$$
 for every agent $i\in S$.
\end{definition}

The notion that is the main focus of our work is a stronger requirement, defined as follows.
\begin{definition}
	A voting rule $R$ is {\em strongly group-strategyproof} (strongly GSP) if for any election $(N,A,P)$ and any coalition $S\subseteq N$ of agents, there is no profile $P_S'$ of the agents in $S$ such that
	$$ d(P_i, R(P_S, P_{-S})) \geq d(P_i, R(P_S', P_{-S}))$$ 
 for every agent $i\in S$, with strict inequality for at least one agent of $S$.
\end{definition}

The rationale behind this last concept is that we demand the voting rule to be resistant to coalitions in which some of the agents may change their preference profile in order to help other members of the coalition (without necessarily gaining something for themselves). Clearly, strong group-strategyproofness is a stronger notion than weak group-strategyproofness. 

Apart from resistance to manipulation, we will examine two properties related to the efficiency of voting rules. The first is a very natural axiom for our setting. 
\begin{definition}
    We say that a voting rule $R$ for selecting a committee of size $k$ is {\em unanimous} if whenever all agents approve the same set $S$ of $k$ alternatives (and nothing else), $R$ outputs $S$ as the selected committee.
\end{definition}

\begin{definition}
    A voting rule $R$ is {\em Pareto-efficient} if, for every input profile $P$, it outputs a set $S$ of $k$ alternatives, such that there is no other committee $S'$ of size $k$, with at least one agent being closer to $S'$ than $S$, and all other agents not being further off. 
\end{definition}
%\vmcom{keep definition of PO?}

\subsection{Voting by ranking ballots}
\label{sec:rankings}

Although our main focus is on approval voting, the proof of our main result is based on a construction that involves elections
where the ballot of each agent is a strict ranking of the alternatives in decreasing order of preference. Hence, a preference profile in ranking-based voting is given by a tuple $(\succ_1, \cdots, \succ_n)$, where $\succ_i$ is the ranking of agent $i$ on the set of alternatives $A$. For brevity, we use $\succ$ to denote the entire preference profile $(\succ_1, \cdots, \succ_n)$. Furthermore, for two alternatives $p, q\in A$, we will use $p \succ_i q$ to denote that agent $i$ prefers $p$ to $q$ (and, hence, $p$ appears higher than $q$ in the ranking of $i$).

Given a set of agents $N$, a set of alternatives $A$, and a preference profile $\succ$, a {\em ranking-based election} is specified by a tuple $(N, A, \succ)$. 
We will consider {\em single-winner ranking-based voting rules}, which return a single alternative as an outcome when applied to a ranking profile. Hence, for a voting rule $T$, we let $T(\succ)$ denote the winning alternative of $T$ when given the preference profile $\succ$ as input.

In analogy to Definition~\ref{def:SP}, we can also define the notion of strategyproofness in this setting as well. In the definition below, for a profile $\succ$, we denote by $\succ_{-i}$ the preference profile of all agents except $i$.

\begin{definition}
\label{def:SP-rankings} 
A ranking-based voting rule $T$ is strategyproof (SP) if for any ranking-based election $(N,A,\succ)$, any agent $i\in N$, and any ranking $\succ_i'$ of the alternatives in $A$, it holds that
	$$ T(\succ) \succ_i T(\succ_i', \succ_{-i}).$$
\end{definition}

The classic impossibility result due to Gibbard and Satterthwaite, states that any SP and {\em onto} single-winner ranking-based voting rule for elections with at least three alternatives is a dictatorship. Here, a voting rule $T$ is onto if, for every alternative $p\in A$, there exists a profile $\succ$ such that $T(\succ) = p$. A voting rule $T$ is a dictatorship if there exists an agent $i$ such that for every profile $\succ$, the outcome $T(\succ)$ is the top choice of agent $i$.

\begin{theorem}[\citealt{Gib73,Sat75}]
\label{thm:GS}
    Any single-winner ranking-based voting rule for ranking-based elections with at least three alternatives, which is SP and onto, must be dictatorial.
\end{theorem}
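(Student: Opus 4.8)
The plan is to give a self-contained proof in the spirit of the ``pivotal voter'' arguments for Arrow's theorem, transported to the choice-function setting via a monotonicity property squeezed out of strategyproofness. Write $f$ for the rule $T$, and recall we are given $m = |A| \ge 3$ together with ontoness and SP. \textbf{Step 1 (SP forces monotonicity).} First I would prove the single-agent lemma: if $f(\succ) = a$ and $\succ_i'$ is obtained from $\succ_i$ by weakly raising $a$ (leaving the relative order of all other alternatives unchanged), then $f(\succ_i', \succ_{-i}) = a$. This follows by applying Definition~\ref{def:SP-rankings} twice --- once at $\succ$ with deviation $\succ_i'$, and once at $(\succ_i',\succ_{-i})$ with deviation $\succ_i$ --- and comparing the two inequalities, using that $a$ only moved up to rule out any $b \neq a$ being selected. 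Composing such steps across agents yields full monotonicity: if the lower contour set of $a$ weakly expands at every ballot, the outcome stays $a$.

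\textbf{Step 2 (onto + monotonicity gives unanimity).} Since $f$ is onto, some profile selects $a$; raising $a$ to the top of every ballot and invoking Step~1 shows that $f = a$ whenever $a$ is everyone's top choice. Any two profiles that both place $a$ at the top have identical lower contour sets for $a$, so monotonicity forces the same outcome on all of them; hence $f(\succ) = a$ whenever every agent ranks $a$ first.

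\textbf{Step 3 (a pivotal voter).} Fix an alternative $b$ and a reference order $R$ of $A$. For $t = 0, 1, \ldots, n$ let $\succ^t$ be the profile in which agents $1,\dots,t$ rank $b$ at the top and agents $t+1,\dots,n$ rank $b$ at the bottom, with $A \setminus \{b\}$ ordered by $R$ for everyone. By unanimity $f(\succ^0) = \mathrm{top}(R) \neq b$ while $f(\succ^n) = b$, so as $b$ climbs ballot by ballot there is a pivotal agent $i^\ast$ with $f(\succ^{i^\ast-1}) \neq b$ and $f(\succ^{i^\ast}) = b$. A short monotonicity argument then pins down $f(\succ^{i^\ast-1}) = \mathrm{top}(R)$ and establishes that $i^\ast$ is pivotal in a strong sense, making it the candidate dictator.

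\textbf{Step 4 (from local control to a global dictator).} Here is where $m \ge 3$ is indispensable and where the main difficulty lies. Using a third alternative $c$ as a pivot, I would show that the pivotal agent for $b$ is in fact the pivotal agent for every alternative, so that a single agent $i^\ast$ controls every pairwise comparison; concretely, one slides $c$ around $i^\ast$'s top position while holding the other ballots fixed and appeals to monotonicity to transfer $i^\ast$'s control from $b$ to $c$ and then to arbitrary pairs. Assembling these pairwise dictatorships gives that $f(\succ)$ is always $i^\ast$'s top-ranked alternative, i.e.\ $f$ is dictatorial. The delicate part is the bookkeeping in this transfer step --- making sure each constructed profile genuinely forces the claimed winner and that the \emph{same} agent emerges across all alternatives. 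An alternative and arguably cleaner route is to reduce to Arrow's theorem directly: build a social welfare function whose pairwise rankings are read off from $f$ restricted to two-element agendas, verify that SP delivers independence of irrelevant alternatives and Step~2 delivers the Pareto property, and then quote Arrow to obtain the dictator.
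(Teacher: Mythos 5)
This statement is the Gibbard--Satterthwaite theorem itself. The paper does not prove it: Theorem~\ref{thm:GS} is imported by citation from \citet{Gib73} and \citet{Sat75} and used purely as a black box in the proof of Theorem~\ref{thm:main}. So there is no paper proof to compare against; what you have written is an attempt to reprove a classical result that the authors deliberately quote.

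Judged on its own terms, your sketch follows the standard modern route (SP $\Rightarrow$ Maskin monotonicity $\Rightarrow$ unanimity $\Rightarrow$ pivotal voter $\Rightarrow$ dictator), and Steps~1 and~2 are correct and essentially complete: the two applications of Definition~\ref{def:SP-rankings}, one at each profile, do yield the monotonicity lemma, and ontoness plus monotonicity does force unanimity. The genuine gap is that Steps~3 and~4 --- which contain nearly all of the difficulty of the theorem --- are plans rather than proofs. In Step~3, the claim that ``a short monotonicity argument pins down $f(\succ^{i^*-1}) = \mathrm{top}(R)$'' is asserted, not shown, and it is not immediate: monotonicity by itself does not tell you the winner is the common top choice among $A\setminus\{b\}$; one needs a Pareto-efficiency property, which must itself first be derived from monotonicity and ontoness. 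In Step~4 you explicitly defer ``the delicate bookkeeping'' of showing that the agent who is pivotal for $b$ is pivotal for \emph{every} alternative, and that these local powers assemble into a single global dictator --- but that transfer argument \emph{is} the theorem; everything before it is routine. The fallback route via Arrow has the same character: building a social welfare function from $f$ on two-element agendas and verifying transitivity and IIA is a substantial argument in its own right, not a verification one can quote in a sentence. So the proposal is a correct strategy with the central argument missing; as a proof of the stated theorem it is incomplete, though as a reading of the paper the honest answer is that no proof was expected here at all.
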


\section{Our Impossibility Result}
\label{sec:impossibility}

Before we embark on our impossibility result, it is instructive to start with a discussion on the existence of SP and weakly GSP voting rules.
If we care only about strategyproofness, the minisum voting rule (with appropriate deterministic tie-breaking) is SP. When it comes to coalitional manipulations, it would be expected that some form of dictatorial mechanisms would be weakly, or even strongly GSP. One crucial issue here with defining a dictatorial rule is that the outcome cannot always coincide exactly with the dictator's preferences, due to the constraint on the size of the committee being exactly $k$. Therefore, a voting rule is dictatorial in our setting if it falls within the following class of voting rules (referred to as $k$-completion by \citet{LMM07}).

\begin{quote}
    \noindent {\bf Dictatorial $k$-completion:} Pick an agent $i$ as the dictator. If $|P_i| < k$, then output the union of $P_i$ together with $k-|P_i|$ alternatives from $A\setminus P_i$ (selected according to some tie-breaking order); else if
$|P_i| > k$, then output a subset of $P_i$ of size $k$ (again, according to some tie-breaking order); else return $P_i$.
\end{quote}

\noindent The $k$-completion rules do indeed satisfy some form of resistance to coalitional manipulation, but they are still not strongly GSP.

\begin{theorem}[\citealt{CKM10}]
    Any $k$-completion voting rule is weakly GSP but not strongly GSP.
\end{theorem}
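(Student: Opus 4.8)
The plan is to treat the two claims separately, relying throughout on the single observation that the outcome of a $k$-completion rule depends \emph{only} on the ballot $P_i$ of the designated dictator $i$ (and on the fixed tie-breaking order), and never on the ballots of the other agents.

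For weak group-strategyproofness, I would first argue that the truthful outcome is always a committee of \emph{minimum} Hamming distance to $P_i$ among all committees of size $k$. Concretely, for any $S$ with $|S|=k$ we have $d(P_i,S)=|P_i|+k-2|P_i\cap S|$, which is minimized exactly when $|P_i\cap S|$ is as large as possible, namely $\min(|P_i|,k)$; so the minimum attainable distance is $\bigl||P_i|-k\bigr|$. A direct inspection of the three regimes of the rule ($|P_i|<k$, $|P_i|=k$, $|P_i|>k$) shows that the returned committee attains this bound. I would then split into two cases. If the dictator $i\notin S$, any deviation $P_S'$ leaves $P_i$ unchanged, hence leaves $R(P_S,P_{-S})$ unchanged, so no member of $S$ can strictly improve. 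If $i\in S$, then after any deviation the new outcome is still a size-$k$ committee and therefore still has distance at least $\bigl||P_i|-k\bigr|$ from the dictator's true ballot $P_i$; thus the dictator is weakly worse off and in particular not strictly better off, so it cannot be that every member of $S$ strictly improves. Either way, weak GSP holds.

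For the failure of strong group-strategyproofness, I would exhibit an explicit two-agent coalition that exploits the slack in the tie-breaking (this requires $k<m$, so that more than one size-$k$ committee exists). Take the dictator $i$ with an empty ballot $P_i=\emptyset$, so that every size-$k$ committee is at the same minimum distance $k$ from $P_i$ and the rule is forced to break ties; let $S_0$ be the committee it returns. Since $k<m$ we may pick another agent $j$ whose truthful ballot $P_j$ is a size-$k$ committee $T\neq S_0$. Now let the coalition $S=\{i,j\}$ deviate by having the dictator report $P_i'=T$, a ballot of size exactly $k$, which the rule returns verbatim; the outcome changes from $S_0$ to $T$. Agent $j$ moves from distance $d(T,S_0)>0$ to distance $0$ and is strictly better off, while the dictator's distance to its true ballot $\emptyset$ is unchanged, since $d(\emptyset,S_0)=d(\emptyset,T)=k$. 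This deviation makes one coalition member strictly better and no member worse, contradicting strong GSP.

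The main obstacle I anticipate is in the weak GSP direction: the argument hinges on verifying, uniformly across the three regimes of the rule, that the truthful outcome really does minimize the dictator's distance, together with the case split over whether the dictator belongs to the coalition; once these are in place, no coalitional deviation can help the dictator and the conclusion follows. For the strong GSP direction the only delicate points are ensuring $S_0$ and $T$ are genuinely distinct (which is exactly what $k<m$ guarantees) and that the dictator remains \emph{exactly} indifferent rather than merely weakly better, both of which the empty-ballot construction makes transparent.
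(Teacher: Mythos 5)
Your proof is correct. Note that the paper does not actually prove this statement---it is imported from \citet{CKM10} as a cited theorem---and the only justification given is a one-sentence remark that the dictator can change his ballot to steer the tie-breaking toward a committee that helps another agent while leaving himself indifferent. Your empty-ballot construction ($P_i=\emptyset$, so all size-$k$ committees are equidistant from the dictator, who then reports the ballot $T$ of a coalition partner) is exactly a rigorous instantiation of that remark, and it has the added merit of working for \emph{every} tie-breaking order rather than depending on it; your weak-GSP argument (truthful outcome attains the minimum distance $\bigl||P_i|-k\bigr|$, so the dictator can never strictly gain, and non-dictators cannot change the outcome at all) is the standard one and is also sound, modulo the implicit and unavoidable assumptions $k<m$ and $n\geq 2$ under which the negative half of the statement is meaningful.
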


The reason a $k$-completion rule is not always strongly GSP is that, dependent on the tie-breaking rule used, the dictator could help some other agent by changing his approval ballot and steer the rule to select a committee that makes some agents better off, while the dictator is not worse off. 

A natural question, posed already a while ago by \citet{CKM10}, is whether non-trivial strongly GSP voting rules exist. Our main result is the following theorem, where we obtain a negative answer, as long as we demand the additional and seemingly harmless property of unanimity.

\begin{theorem}
\label{thm:main}
    Consider approval elections with $m$ alternatives, where the outcome must always be a committee of size $k$, with $k\in [m-2]$. Then, there is no multi-winner approval-based voting rule that is both strongly GSP and unanimous. 
\end{theorem}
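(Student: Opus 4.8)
The plan is to reduce the existence of a strongly GSP and unanimous multi-winner approval-based voting rule to the existence of an SP, onto, and non-dictatorial single-winner ranking-based voting rule, which is impossible by the Gibbard--Satterthwaite theorem (Theorem~\ref{thm:GS}). Concretely, suppose for contradiction that $R$ is strongly GSP and unanimous for committees of size $k \in [m-2]$. I would fix a small set of ``special'' committees of size $k$ and use $R$ to induce a ranking-based voting rule $T$ on a ground set of at least three alternatives, each alternative corresponding to one such committee. The number of special committees should be at least three (so Gibbard--Satterthwaite applies), and the constraint $k \le m-2$ is precisely what guarantees enough room to build the needed committees; for instance, with $k$ fixed one can find at least three distinct $k$-subsets whose pairwise Hamming distances are controlled.

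The first key step is the \emph{encoding}: I would map each ranking $\succ_i$ over the special committees to an approval ballot $P_i \subseteq A$ for the agent in the induced approval instance. The encoding must have the property that an agent's Hamming-distance preference over the special committees, induced by the approval ballot $P_i$, reproduces exactly the strict ranking $\succ_i$ restricted to those committees. This requires choosing the special committees and the ballot-construction map so that, for the relevant ballots, the Hamming distance $d(P_i, S_j)$ is strictly monotone in the position of committee $S_j$ in the ranking. I would then define $T(\succ)$ to be the (index of the) committee $R$ selects when each agent submits the encoded ballot, \emph{provided} $R$'s output lands inside the set of special committees; ensuring this last point is part of the design and is where unanimity and the distance geometry do the work.

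The second key step is transferring the axioms. \textbf{Ontoness} of $T$ follows from unanimity of $R$: if every agent ranks a particular special committee $S_j$ first, their encoded ballots all approve exactly $S_j$ (or are otherwise tuned so unanimity forces output $S_j$), so $T$ outputs that committee. \textbf{Strategyproofness} of $T$ follows from strong GSP of $R$ applied to singleton coalitions (i.e., ordinary strategyproofness, which strong GSP implies): a profitable single-agent manipulation in the ranking world would, under the encoding, translate into a distance-decreasing unilateral deviation in the approval world, contradicting SP-ness of $R$. \textbf{Non-dictatorship} is the subtle part and is exactly where I expect strong GSP (rather than mere SP) to be essential: I would argue that if $T$ were dictatorial with dictator $i$, then in the approval instance some coalition containing $i$ together with another agent could jointly deviate so that the second agent is made strictly better off while the dictator is kept indifferent, exploiting the fact that a $k$-completion-style dictatorship must break ties among committees and those ties can be reallocated to help a coalition partner at no cost to the dictator. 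This would contradict strong GSP of $R$.

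The main obstacle, and the heart of the argument, is engineering the encoding so that all three properties transfer cleanly and simultaneously. The difficulty is that Hamming distance is a coarse, symmetric measure, so a single approval ballot cannot in general realize an arbitrary ranking over many committees; this forces the special-committee set to be chosen with great care (likely only a constant number of committees, exploiting committees that differ in just a few coordinates), and forces the reduction to route every ranking profile to an approval profile whose $R$-outcome provably stays within the special set. Making the non-dictatorship implication watertight --- showing that any induced dictatorship in the ranking world necessarily corresponds to a strong-GSP violation in the approval world, rather than merely an SP violation --- is where the proof must use the full strength of the strong GSP hypothesis and the indifference that Hamming-distance ties create. Once the encoding and these three transfers are established, Theorem~\ref{thm:GS} yields the contradiction and completes the proof.
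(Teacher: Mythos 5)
Your high-level plan---induce a single-winner ranking rule $T$ from $R$ and invoke Gibbard--Satterthwaite---is exactly the paper's strategy, but your encoding is where the argument breaks, and it breaks in a way that cannot be repaired within your one-ballot-per-agent framework. You want each ranking agent to become a single approval agent whose ballot's Hamming distances to the special committees reproduce her strict ranking. For committees of size $k$ we have $d(P_i,S_j)=|P_i|+k-2|P_i\cap S_j|$, so the induced ordering is determined entirely by the intersection sizes $|P_i\cap S_j|$. When $k=1$---which lies in the theorem's range $[m-2]$ for every $m\geq 3$---all committees are singletons, each intersection is $0$ or $1$, and the distances from any fixed ballot take only two values; hence no single ballot can strictly rank three committees, and the encoding you require simply does not exist. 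The paper circumvents precisely this obstruction by representing each ranking agent $i$ as $m-1$ \emph{copies} $(i,1),\dots,(i,m-1)$, where copy $(i,j)$ approves the top-$j$ prefix of $\succ_i$ together with $k-1$ dummy alternatives approved by everyone: the ranking is carried by the collection of copies, not by any single ballot. A further consequence is that your claim that strategyproofness of $T$ follows from strategyproofness of $R$ alone (strong GSP on singleton coalitions) is an artifact of the unworkable encoding: in the working reduction, a unilateral deviation in the ranking world becomes a \emph{coalitional} deviation of the $m-1$ copies in the approval world, in which some copies strictly gain and the rest are exactly indifferent, so the full strength of strong GSP is needed already for the SP transfer, not only for the dictatorship step.

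The second gap is that you acknowledge but never supply the mechanism forcing $R$'s output on encoded profiles to land inside the special committee set; you defer this to ``unanimity and the distance geometry.'' Unanimity constrains $R$ on exactly one kind of profile (all ballots identical and of size $k$) and says nothing about any other profile, so it cannot do this work by itself. The paper's mechanism is a separate lemma: unanimity plus strong GSP imply Pareto efficiency (the grand coalition can deviate to a unanimous profile, so a Pareto-dominated output would be overturned), and then Pareto efficiency combined with the dummy/copy structure forces the output to be $D\cup\{a\}$ for a single $a\in A$. Without an analogue of this lemma, your rule $T$ is not even well defined. Finally, your non-dictatorship step is in the right spirit---the paper likewise reaches the final contradiction by letting a copy of the dictator report a ballot making it indifferent between two committees $\{x\}\cup D$ and $\{y\}\cup D$, so that a coalition partner can be helped at no cost---but making this precise requires evaluating $R$ on a profile \emph{outside} the image of the encoding and again uses Pareto efficiency to control the output there; as stated, your tie-reallocation argument is a heuristic rather than a proof, since dictatorship of $T$ by itself says nothing about how $R$ behaves off the image of the reduction.
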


%For example, we observe that the refined $k$-completion algorithm of Section \ref{sec:weak-gsp} is not strongly GSP. If the dictator is indifferent between outcomes, she can restrict her approval sets in order to form coalitions with voters not favored by the ordering. As we will show, strongly GSP algorithms cannot guarantee any approximation ratio for the minimax objective. To do this, we start by drawing a connection between the approximation ratio of strongly GSP algorithms and the properties of unanimity and Pareto-efficiency.

The remaining section is devoted to the proof of this result. We start with the following lemma, which connects unanimity, strongly group-strategyproofness, and Pareto-efficiency.

%\begin{lemma}\label{lem:finite-approx}
%	Any strongly GSP algorithm for $k$-MAV with a finite approximation ratio is Pareto-efficient. 
%\end{lemma}

\begin{lemma}\label{lem:finite-approx}
	Any strongly GSP and unanimous multi-winner approval-based voting rule is Pareto-efficient. 
\end{lemma}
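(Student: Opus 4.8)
The plan is to prove the contrapositive: I will show that if a voting rule $R$ is unanimous but \emph{not} Pareto-efficient, then it cannot be strongly GSP. So suppose $R$ is unanimous and fails Pareto-efficiency at some profile $P$. Then there is a committee $S' \neq R(P) =: S$ of size $k$ that Pareto-dominates $S$: every agent is at least as close to $S'$ as to $S$ (i.e. $d(P_i, S') \le d(P_i, S)$ for all $i$), with strict inequality $d(P_j, S') < d(P_j, S)$ for at least one agent $j$. My goal is to exhibit a coalition $T$ and a joint deviation $P'_T$ under which $R$ outputs exactly $S'$, thereby witnessing a violation of strong group-strategyproofness.

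The natural choice is to take the deviating coalition to be the set of \emph{all} agents $T = N$ and have them all report the common approval ballot $P'_i = S'$. By unanimity, since all agents now approve the same size-$k$ set $S'$, the rule must output $R(P'_N) = S'$. Now I compare, for each agent $i$, the distance to the old outcome $S = R(P)$ against the distance to the new outcome $S'$. By the Pareto-domination assumption, $d(P_i, S') \le d(P_i, S)$ for every $i \in N$, with strict inequality for agent $j$. Reading this through the definition of strong GSP with $S$ being the coalition of all agents, starting profile $P$, and deviation $P'_N$, we have exactly $d(P_i, R(P)) \ge d(P_i, R(P'_N))$ for all $i$ with strict inequality for $j$ --- which is precisely the forbidden configuration. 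Hence $R$ is not strongly GSP, completing the contrapositive.

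The one subtlety I want to be careful about is the direction of the deviation in the definition of strong GSP. The definition forbids a deviation \emph{from} a profile that moves the coalition to a weakly-better-for-all, strictly-better-for-one outcome. Here the ``starting'' profile in the role of $(P_S, P_{-S})$ is the original $P$ with outcome $S$, and the deviation $(P'_S, P_{-S})$ with $S = N$ (so $P_{-S}$ is empty) gives outcome $S'$. Since the agents' \emph{true} preferences are still the original ones $P_i$ --- strategyproofness is always evaluated against the sincere preferences, which are the ballots in $P$ --- the relevant distances are $d(P_i, \cdot)$ exactly as the Pareto comparison provides. I should state explicitly that the true preference of each deviating agent is its original ballot $P_i$, so that the inequalities match Definition of strong GSP verbatim.

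I expect the main obstacle to be purely one of bookkeeping rather than mathematical depth: ensuring that the dominating committee $S'$ can in fact be realized as the \emph{unanimous} report of a size-$k$ approval set, so that unanimity applies cleanly. This is immediate because $|S'| = k$ by the definition of Pareto-efficiency (the dominating alternative is required to be a committee of size $k$), so every agent reporting $S'$ is a legitimate approval ballot and unanimity forces $R$ to return $S'$. No restriction on $k$ beyond $1 \le k \le m$ is needed for this lemma; the $k \le m-2$ hypothesis of Theorem~\ref{thm:main} will only enter later in the Gibbard--Satterthwaite reduction, not here.
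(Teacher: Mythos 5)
Your proof is correct and follows exactly the paper's own argument: assume Pareto-efficiency fails, have the grand coalition of all agents unanimously report the dominating committee, and invoke unanimity to force that committee as the outcome, contradicting strong group-strategyproofness. The extra care you take (verifying the dominating committee has size $k$, and that true preferences are the original ballots) is sound bookkeeping that the paper leaves implicit.
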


\begin{proof}
Assume that the voting rule is not Pareto-efficient. Then, there exists an approval election $(N,A,P)$, for which the voting rule returns a $k$-sized committee of alternatives $K$, while there exists another $k$-sized committee $K'$ such that $d(K',P_{i^*}) < d(K,P_{i^*})$ for some agent $i^*\in N$ and $d(K',P_i) \leq d(K,P_i)$ for any other agent $i\in N$. But then, all agents have an incentive to form a coalition and misreport their preferences by simply reporting the set of alternatives $K'$. In that case, by the unanimity property, $K'$ should be selected as the outcome. The agent $i^*$ would then become strictly better off, and the rest of the agents would not be worse off. This would contradict the fact that our voting rule is strongly GSP.
\end{proof}

Using Lemma \ref{lem:finite-approx}, we will prove Theorem~\ref{thm:main} by showing that no Pareto-efficient multi-winner approval-based voting rule is strongly GSP. 
We first provide a high-level outline of our proof. We believe that the technique 
%(and its by-product stated in Theorem \ref{thm:gsppareto}) 
can be of independent interest, potentially useful for establishing other impossibility results as well. 
The backbone of the proof is a reduction that transforms a ranking-based election (as defined in Section \ref{sec:rankings}) into an approval election, so that any Pareto-efficient multi-winner approval-based voting rule for the latter is mapped naturally to a single-winner ranking-based voting rule for the former. We show that if the multi-winner approval-based voting rule is furthermore strongly GSP, then the single-winner ranking-based voting rule is SP and onto. Then, the Gibbard-Satterthwaite theorem (Theorem~\ref{thm:GS}) implies certain ``dictatorship-like'' properties for the multi-winner voting rule, eventually leading to a contradiction.

%Following the notation of Section \ref{sec:rankings}, a preference profile in voting by ranking ballots is given by a tuple $(\succ_1, \cdots, \succ_n)$, where $\succ_i$ is a strict linear order of voter $i$ on the set of candidates. We denote such a tuple by $\succ$.

\begin{definition}
	\label{def:reduction}[Construction of an approval election from a ranking-based election] 
	Given a parameter $k$ and a ranking-based election $I=(N,A,\succ)$ with $|N| = n$ and $|A| = m$, we construct a corresponding approval election $=(N',A',P(\succ))$, as follows: 
	\begin{itemize}
		\item For every agent $i\in N$, we introduce $m-1$ different copies of agents, indexed as $(i,1),\dots,(i,m-1)$. Hence, the approval-based election has a total of $(m-1)\cdot n$ agents.
		\item The set of alternatives is $A'=A\cup D$, where $D$ consists of $k-1$ dummy alternatives. Hence, the approval election has a total of $|A'| = m+k-1$ alternatives.
		\item For $i\in [n]$ and $j\in [m-1]$, agent $(i,j)$ has the approval ballot $P_{(i,j)}$ defined as   
		$$P_{(i,j)} = \{\text{top $j$ ranked alternatives in $\succ_i$}\}\cup D$$
		Hence, the approval profile is given by $P(\succ) = \{P_{(i, j)}\}_{i\in [n], j\in [m-1]}$.
	\end{itemize}
\end{definition}

\begin{example}
Before we proceed, we demonstrate the construction with a simple example, when $n$ and $m$ are small. Consider the ranking-based election $(N,A,\succ)$ with $N=\{1,2\}$, $A=\{x, y, z\}$, and with $\succ$ defined as $x\succ_1 y \succ_1 z$ and $y\succ_2 z \succ_2 x$. 
The reduction of Definition \ref{def:reduction} for $k=2$ gives an approval election with four agents $N'=\{(1,1), (1,2), (2,1), (2,2)\}$, and four candidates $A'=\{x, y, z, d_1\}$. Here, we have only one dummy candidate. The approval ballots are: $P_{(1,1)}=\{x,d_1\}$, $P_{(1, 2)}=\{x,y,d_1\}$, $P_{(2, 1)}=\{y,d_1\}$, and $P_{(2,2)}=\{y,z,d_1\}$.
\end{example}

Now, let $(N,A,\succ)$ be a ranking-based election and let $(N',A',P(\succ))$ be the corresponding approval election defined by our reduction in Definition~\ref{def:reduction}. Also, let $R$ be a Pareto-efficient multi-winner approval-based voting rule which outputs the committee $R(P(\succ))$ of size $k$ when applied on the profile $P(\succ)$. We will show that $R(P(\succ))\setminus D$ is a singleton and will use it to define a corresponding single-winner ranking-based voting rule for the ranking-based election.

%The way we defined the solution in \eqref{eq:reduction}, it appears that $T(\succ)$ may not produce a single winner. However, in the next lemma we show that if we start with a Pareto-efficient algorithm for the approval voting setting, then $T(\succ)$ produces a single winner in the voting by rankings setting, i.e., all $k-1$ dummy candidates should be included in any Pareto-efficient algorithm.

\begin{lemma}\label{lem:feasible}
If the multi-winner approval-based voting rule $R$ is Pareto-efficient, then $R(P(\succ))\setminus D$ is a singleton. 
\end{lemma}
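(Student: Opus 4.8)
The plan is to pin down the committee $K = R(P(\succ))$ via a Pareto-improvement argument that exploits the fact that every agent approves all $k-1$ dummies, since $D \subseteq P_{(i,j)}$ for every copy $(i,j)$. Because $|K| = k$ and $|D| = k-1$, the committee can contain at most $k-1$ dummies, so $|K \setminus D| \geq 1$ holds trivially. It remains to rule out $|K \setminus D| \geq 2$; note that in that case $|K \cap D| \leq k-2 < |D|$, which also guarantees that at least one dummy $d^* \in D$ is left out of $K$.

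First I would record the key structural feature of the reduction: for each original agent $i$, the copy $(i,1)$ approves only the top-ranked alternative of $\succ_i$ (together with all of $D$). Consequently, a real alternative $a \in A$ is approved by \emph{every} agent of $N'$ if and only if $a$ is the top choice of every $\succ_i$. Since at most one alternative can simultaneously be everyone's favorite, at most one real alternative is universally approved.

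Next, suppose for contradiction that $K$ contains two distinct real alternatives $a_1, a_2 \in K \setminus D$. By the observation above, at least one of them --- call it $a^*$ --- is not universally approved, so there is a copy $(i_0,1)$ with $a^* \notin P_{(i_0,1)}$. I would then form the committee $K' = (K \setminus \{a^*\}) \cup \{d^*\}$, which again has size $k$. Writing $d(P,K) = |P| + k - 2\,|P \cap K|$, minimizing distance amounts to maximizing overlap, so I track how overlap changes under the swap: for every copy $(i,j)$, adding the universally approved $d^*$ raises the overlap by $1$, while removing $a^*$ lowers it by $1$ only for copies that approve $a^*$. Hence $d(P_{(i,j)}, K') \leq d(P_{(i,j)}, K)$ for all copies, with strict inequality for $(i_0,1)$. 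This exhibits $K'$ as a strict Pareto improvement over $K$, contradicting Pareto-efficiency and forcing $|K \setminus D| = 1$, i.e. $R(P(\succ)) \setminus D$ is a singleton.

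The main obstacle is the middle step: showing that whenever two real alternatives are selected, at least one of them is disapproved by some agent. This is precisely where the design of the reduction pays off --- the ``narrowest'' copies $(i,1)$, each approving a single real alternative, reduce the question to the elementary fact that at most one alternative can be the unanimous top choice. Once that is in hand, the remaining overlap bookkeeping is routine and the Pareto-improvement swap closes the argument.
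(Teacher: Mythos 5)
Your proposal is correct and follows essentially the same route as the paper's proof: both exploit the narrowest copies $(i,1)$ to find a selected non-dummy alternative $a^*$ disapproved by some agent, note that a dummy in $D$ is necessarily left out of the committee, and perform the swap $K' = (K\setminus\{a^*\})\cup\{d^*\}$ as a Pareto improvement. Your detour through ``at most one real alternative is universally approved'' is a slightly more roundabout way of locating $a^*$ than the paper's direct observation that any fixed $(i,1)$ approves exactly one non-dummy and hence disapproves at least one of the two selected real alternatives, but the substance is identical.
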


\begin{proof}
For the sake of contradiction, assume that $|R(P(\succ))\setminus D|\geq 2$. Note that for every agent $i\in N$, agent $(i,1)$ of $N'$ approves exactly one alternative outside $D$. Hence, there is at least one alternative in the outcome $R(P(\succ))$, say $a^*$, that this agent does not approve. Moreover, each alternative in $D\setminus R(P(\succ))$ is approved by all agents. Now, consider replacing $a^*$ in the outcome of $R$ by some alternative in $D\setminus R(P(\succ))$. Agent $(i, 1)$ is strictly better off now, and all other agents are not worse off. Hence, $R$ is not Pareto-efficient, a contradiction.
\end{proof}

We now define the single-winner ranking-based voting rule $T$, which returns the alternative contained in $R(P(\succ))\setminus D$ on input the profile of rankings $\succ$. We will show that if the multi-winner approval-based voting rule $R$ is both Pareto-efficient and strongly GSP, the rule $T$ satisfies the premises of the Gibbard-Satterthwaite theorem (Theorem \ref{thm:GS}).

\begin{lemma}\label{lem:sp+onto}
If $R$ is a strongly GSP and Pareto-efficient multi-winner approval-based voting rule, the induced single-winner ranking-based voting rule $T$ is onto and SP.
\end{lemma}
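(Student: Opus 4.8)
I would prove the two properties separately, both resting on one explicit distance computation. Fix an alternative $a\in A$ and a copy $(i,j)$, and write $T_j^i$ for the set of top $j$ alternatives of $\succ_i$, so that $P_{(i,j)}=T_j^i\cup D$. Since $D$ is disjoint from $A$ and shared by both sets, a direct count using the definition of $d$ yields
\[
d\big(P_{(i,j)},\{a\}\cup D\big)=
\begin{cases}
j-1, & \text{if } a\in T_j^i,\\
j+1, & \text{if } a\notin T_j^i.
\end{cases}
\]
Establishing this formula cleanly is the first step, as everything else reduces to reading it off.

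For ontoness, I would fix any $p\in A$ and take the ranking profile $\succ$ in which every agent ranks $p$ first. Then $P_{(i,1)}=\{p\}\cup D$ for every $i$, so $\bigcap_{i,j}P_{(i,j)}=\{p\}\cup D$. Because $|P_{(i,j)}|=j+k-1\ge k$, the minimum possible distance from $P_{(i,j)}$ to any size-$k$ committee is $|P_{(i,j)}|-k$, attained exactly by the size-$k$ committees contained in $P_{(i,j)}$. Hence the only size-$k$ committee simultaneously minimizing every agent's distance is the unique size-$k$ subset of $\bigcap_{i,j}P_{(i,j)}$, namely $\{p\}\cup D$. Any other committee is therefore Pareto-dominated by $\{p\}\cup D$, so Pareto-efficiency forces $R(P(\succ))=\{p\}\cup D$ and thus $T(\succ)=p$, making $T$ onto.

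For strategyproofness, I would argue by contradiction, lifting a manipulation of $T$ to a coalitional deviation in the approval election. Suppose agent $i$ with true ranking $\succ_i$ can profitably misreport some $\succ_i'$, i.e.\ $q:=T(\succ_i',\succ_{-i})$ satisfies $q\succ_i p$ where $p:=T(\succ)$. By Lemma~\ref{lem:feasible} (and $|D|=k-1$) the two outcomes are $R(P(\succ))=\{p\}\cup D$ and $R(P(\succ_i',\succ_{-i}))=\{q\}\cup D$. Now take the coalition $S=\{(i,j):j\in[m-1]\}$ of all copies of $i$; since each ballot $P_{(i',j)}$ depends only on $\succ_{i'}$, replacing $\succ_i$ by $\succ_i'$ changes exactly the reports of $S$ and leaves $P_{-S}$ fixed. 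Comparing true distances via the displayed formula, $d(P_{(i,j)},\{q\}\cup D)\le d(P_{(i,j)},\{p\}\cup D)$ holds for every $j$, because $q\succ_i p$ means that whenever $p\in T_j^i$ also $q\in T_j^i$; and the inequality is strict for $j$ equal to the rank $r_q$ of $q$ in $\succ_i$ (note $r_q<r_p\le m$, so $r_q\in[m-1]$ and the copy $(i,r_q)$ exists), where $q\in T_{r_q}^i$ but $p\notin T_{r_q}^i$. Thus the deviation of $S$ weakly improves every coalition member and strictly improves one, contradicting strong group-strategyproofness.

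The main obstacle is this last part: one must exhibit the right coalition and verify weak domination simultaneously across all prefix lengths $j$. This is where the two structural choices of the reduction pay off. The $k-1$ dummies guarantee (through Lemma~\ref{lem:feasible}) that each outcome is a single real alternative padded by $D$, so that a manipulation of $T$ becomes a clean swap $\{p\}\cup D\to\{q\}\cup D$; and the $m-1$ copies of each agent convert the single ranking comparison $q\succ_i p$ into the family of prefix comparisons above, each of which collapses, by the distance formula, to the elementary fact that promoting $q$ above $p$ can only insert $q$ into a top-$j$ set without ever removing $p$ from it.
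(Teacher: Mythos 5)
Your proposal is correct and follows essentially the same route as the paper's proof: ontoness via the profile where every agent ranks $p$ first (so that Pareto-efficiency forces the outcome $\{p\}\cup D$), and strategyproofness by lifting a profitable misreport of agent $i$ under $T$ to a deviation by the coalition of all $m-1$ copies of $i$, which weakly improves every copy and strictly improves the copy whose prefix contains $q$ but not $p$. Your explicit distance formula and the check that the strictly improving copy $(i,r_q)$ exists (since $r_q\le m-1$) are welcome elaborations of steps the paper leaves implicit, but the argument is the same.
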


\begin{proof}
We first prove that $T$ is onto, i.e., for every alternative $a\in A$, there exists $\succ$ such that $T(\succ)=a$. We construct $\succ$ such that $a$ is the top alternative in every agent's ranking. Note that, in the corresponding approval voting profile $P(\succ)$ defined by our reduction in Definition~\ref{def:reduction}, the only Pareto-efficient committee of size $k$ is $D\cup \{a\}$. Hence, this must be the outcome of $R$, and this means that $T$ is equal to the single alternative in $R(\succ)\setminus D$, i.e., alternative $a$.

Second, we show that $T$ is SP. Assume towards a contradiction that $T$ is not SP, i.e., there exists a preference profile $\succ$, some agent $i$, and some ranking of alternatives $\succ'_i$ such that
\begin{equation}\label{eq23}
T(\succ'_i,\succ_{-i})\succ_i T(\succ_i,\succ_{-i}).
\end{equation}
	
By our reduction, the approval ballots of the profiles $P(\succ_i,\succ_{-i})$ and $P(\succ'_i,\succ_{-i})$ differ only in the agents $(i,1),\dots,(i,m-1)$. Hence, the main idea is to show that the unilateral deviation to $\succ'_i$ by agent $i$ in the ranking-based election corresponds to successful deviations by the coalition of agents $(i,1),\dots,(i,m-1)$ in the approval-based election, from profile $P(\succ)$ to profile $P(\succ'_i,\succ_{-i})$.

Let $a=T(\succ_i,\succ_{-i})$ and $b = T(\succ'_i,\succ_{-i})$. Suppose that according to $\succ_i$, alternative $a$ is in position $r$ and $b$ is in position $r'$. Observe that Equation~\eqref{eq23} implies $r'<r$. %Consider the group deviation  of voters $(i,1),\dots,(i,m)$ in the approval voting setting, where the misreport of their preferences is derived by Definition \ref{def:reduction}, when changing $\succ$ to $(\succ_i', \succ_{-i})$. 
Since we assumed that, under rule $T$, $b$ is the winner in profile $(\succ'_i,\succ_{-i})$ and $a$ is the winner in profile $\succ$, this means that the outcome of rule $R$ in the approval voting election is $D\cup \{a\}$ for profile $P(\succ)$ and $D\cup \{b\}$ for profile $P(\succ'_i,\succ_{-i})$. Note that the agents $(i,j)$ with $r'\leq j <r$ approve alternative $b$ but do not approve alternative $a$. Hence, their distance from the outcome of $R$ decreases in profile $P(\succ'_i,\succ_{-i})$ compared to profile $P(\succ)$. The remaining agents in the coalition are indifferent, since either they approve both $a$ and $b$ (this holds for agents $(i, j)$ with $j\geq r$, if any), or they approve neither of them (this holds for agents $(i,j)$ with $j<r'$, if any). Hence, we have constructed a deviation, where at least one member of the coalition is better off, and some are indifferent. This contradicts the assumption that the multi-winner approval-based voting rule $R$ is strongly GSP. Therefore, the single-winner ranking-based voting rule $T$ is SP.
\end{proof}

Notice that, according to our reduction in Definition~\ref{def:reduction}, if the number of alternatives in the approval election is at least $k+2$, then the number of alternatives in the ranking-based election is at least $3$. Thus, by Lemmas~\ref{lem:feasible},~\ref{lem:sp+onto}, and Theorem~\ref{thm:GS}, we get the following corollary. 
\begin{corollary}\label{cor:its-a-dictatorship}
Let $R$ be a Pareto-efficient and strongly GSP multi-winner approval-based voting rule for approval elections with $m\geq 3$ alternatives and $k\in [m-2]$. Then, the induced single-winner ranking-based voting rule $T$ is a dictatorship, i.e., there exists an agent $i^*$ such that for every profile $\succ$, $T(\succ)$ is the top preference of agent $i^*$.
\end{corollary}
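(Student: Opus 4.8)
The plan is to assemble the preceding lemmas and invoke the Gibbard--Satterthwaite theorem directly; the corollary is essentially a bookkeeping step that certifies all hypotheses of Theorem~\ref{thm:GS} are met by the induced rule $T$.

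First I would confirm that $T$ is a well-defined single-winner ranking-based voting rule. Since $R$ is Pareto-efficient, Lemma~\ref{lem:feasible} guarantees that $R(P(\succ))\setminus D$ is a singleton for every ranking profile $\succ$, so $T(\succ)$---the unique non-dummy alternative in $R$'s committee---is unambiguously defined on every input. Thus $T$ is a total function from ranking profiles to alternatives, which is exactly the kind of object that Theorem~\ref{thm:GS} speaks about.

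Next I would verify the two behavioural hypotheses of Gibbard--Satterthwaite. Because $R$ is both strongly GSP and Pareto-efficient, Lemma~\ref{lem:sp+onto} shows that $T$ is simultaneously onto and SP. Combined with the cardinality bookkeeping discussed just before the corollary---namely that when the approval election has at least $k+2$ alternatives (equivalently $k\in[m-2]$), the underlying ranking-based election produced by the reduction in Definition~\ref{def:reduction} has at least three alternatives---this places $T$ squarely within the scope of Theorem~\ref{thm:GS}.

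Finally I would apply the Gibbard--Satterthwaite theorem to $T$: any SP and onto single-winner ranking-based rule on at least three alternatives must be dictatorial. This yields an agent $i^*$ such that $T(\succ)$ is the top-ranked alternative of $i^*$ for every profile $\succ$, which is precisely the claim. I do not anticipate a genuine obstacle here, since the substantive work---establishing the singleton property, ontoness, and strategyproofness of $T$---has already been carried out in Lemmas~\ref{lem:feasible} and~\ref{lem:sp+onto}. The only point requiring care is the arithmetic linking the admissible committee sizes $k\in[m-2]$ to the guarantee of at least three alternatives in the ranking election, so that the three-alternative premise of Theorem~\ref{thm:GS} is actually satisfied.
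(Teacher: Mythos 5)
Your proposal is correct and follows essentially the same route as the paper: the paper likewise derives the corollary by combining Lemma~\ref{lem:feasible} (well-definedness of $T$), Lemma~\ref{lem:sp+onto} ($T$ is onto and SP), the observation that an approval election with at least $k+2$ alternatives yields a ranking-based election with at least three alternatives under the reduction of Definition~\ref{def:reduction}, and then invoking Theorem~\ref{thm:GS}. Your added care about the arithmetic linking $k\in[m-2]$ to the three-alternative premise is exactly the one nontrivial check, and you handle it correctly.
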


The fact that $T$ is a dictatorship implies that $R$ also has some dictatorship-like attributes. We show that this contradicts the fact that $R$ is strongly GSP, which implies that the intersection of strongly GSP and Pareto-efficient multi-winner approval-based voting rules is empty, completing the proof of Theorem~\ref{thm:main}.

%\begin{theorem}\label{thm:gsppareto}
%For any $k\geq 1$, a strongly GSP voting rule in the approval voting setting for selecting a committee of $k$ candidates out of $m$ candidates, with $m\geq k+2$, cannot be Pareto-efficient.
%\end{theorem}

Consider a unanimous (and, by Lemma~\ref{lem:finite-approx}, Pareto-efficient) and strongly GSP multi-winner approval-based voting rule $R$ which outputs committees of size $k\in [m'-2]$ for elections with $m'\geq 3$ alternatives. Consider the reduction of Definition \ref{def:reduction}, with parameters $k$, $n=3$, and $m=m'-k+1$, with $m\geq 3$, and let $T$ be the single-winner ranking-based voting rule induced by $R$. By Corollary~\ref{cor:its-a-dictatorship}, $T$ is a dictatorship, and without loss of generality, let us assume that the dictator is agent $1$.

Now, let $x$ and $y$ be two specific alternatives of $A$ and consider the rankings $\succ^x$ and $\succ^y$ of the alternatives of $A$, which differ only in the two top positions, defined as follows. Ranking $\succ^x$ has alternative $x$ first, alternative $y$ second, and then the alternatives of $A\setminus\{x,y\}$ in some arbitrary order. Ranking $\succ^y$ has alternative $y$ first, alternative $x$ second, and then the alternatives of of $A\setminus\{x,y\}$ in the same order with $\succ^x$. Based on these, we will construct three distinct approval profiles and argue about their outcome under $R$, that will eventually lead to a contradiction.

Consider first the ranking profiles $(\succ^x,\succ^x,\succ^y)$ and $(\succ^y,\succ^x,\succ^y)$. The corresponding approval profiles $P_x=P(\succ^x,\succ^x,\succ^y)$ and $P_y=P(\succ^y,\succ^x,\succ^y)$ differ only in the ballot of agent $(1,1)$, which is $\{x\}\cup D$ in $P_x$ and $\{y\}\cup D$ in $P_y$. Among the remaining agents in profiles $P_x$ and $P_y$, the ones that are of interest for the arguments below are agent $(2,1)$ with approval ballot $\{x\}\cup D$ and agent $(3,1)$ with approval ballot $\{y\}\cup D$. By the definition of our reduction, the fact that $T$ is a dictatorship of agent $1$, and the relation between the voting rules $R$ and $T$, we have that $R(P_x)=\{x\}\cup D$ and $R(P_y)=\{y\}\cup D$.

Now, consider a third approval profile $P_{xy}$, differing from $P_x$ and $P_y$ only in the ballot of agent $(1,1)$ which is now $\{x,y\}\cup D$. For convenience, the three profiles are depicted in Table~\ref{tab:example}, when $m=3$. We remark that this profile is not produced by our reduction. Still, the Pareto-efficiency of voting rule $R$, implies that $D\subset R(P_{xy})$, since the alternatives in $D$ appear in the approval ballot of all agents in $P_{xy}$. We distinguish between two cases regarding the single alternative in $R(P_{xy})\setminus D$.

First, if $R(P_{xy})\setminus D\not=\{x\}$, then the agents $(1,1)$ and $(2,1)$, with ballots $\{x,y\}\cup D$ and $\{x\}\cup D$ under $P_{xy}$, have Hamming distance at least $1$ and exactly $2$  respectively, from the $k$-sized committee $R(P_{xy})$. The deviation of these two agents to approval ballots $\{x\}\cup D$ for both yields the approval profile $P_x$, for which rule $R$ outputs the committee $\{x\}\cup D$. Its Hamming distance from the approval ballots of agents $(1,1)$ and $(2,1)$ in profile $P_{xy}$ is only $1$ and $0$, respectively, implying the existence of a successful deviating coalition that contradicts the assumption that voting rule $R$ is strongly GSP.

It remains to consider the case $R(P_{xy})=\{x\}\cup D$. Then, the agents $(1,1)$ and $(3,1)$, with ballots $\{x,y\}\cup D$ and $\{y\}\cup D$ under $P_{xy}$, have Hamming distance $1$ and $2$ from the committee $R(P_{xy})$ respectively. Their deviation to ballot $\{y\}\cup D$ for both yields the approval profile $P_y$, for which the rule $R$ outputs the $k$-sized committee $\{y\}\cup D$. Hence, this will yield a Hamming distance of $1$ and $0$ from the approval ballots of agents $(1,1)$ and $(3,1)$ in the approval profile $P_{xy}$. This again implies the existence of a successful deviating coalition, contradicting the assumption that $R$ is strongly GSP.

We conclude that $R$ cannot be Pareto-efficient (and, hence, unanimous) and strongly GSP, completing the proof of Theorem~\ref{thm:main}.

\begin{table}[ht]
\centering
    \caption{An example of the profiles used in the final step of the proof of Theorem~\ref{thm:main}, when the ranking-based election we start from has $m=3$ alternatives. The profiles differ only in the approval ballot of agent $(1,1)$ and any Pareto-efficient outcome of the rule $R$ (i.e., either $\{x\}\cup D$ or $\{y\}\cup D$) on profile $P_{xy}$ violates strong group-strategyproofness. }
    \label{tab:example}
    \begin{tabular}{c|c c c}\toprule
    \text{agent} & $P_x$ & $P_y$ & $P_{xy}$\\\hline
    $(1,1)$ & $\{x\} \cup D$ & $\{y\} \cup D$ & $\{x,y\} \cup D$\\
    $(1,2)$ & $\{x,y\} \cup D$ & $\{x,y\} \cup D$ & $\{x,y\} \cup D$\\
    $(2,1)$ & $\{x\} \cup D$ & $\{x\} \cup D$ & $\{x\} \cup D$\\
    $(2,2)$ & $\{x,y\} \cup D$ & $\{x,y\} \cup D$ & $\{x,y\} \cup D$\\
    $(3,1)$ & $\{y\} \cup D$ & $\{y\} \cup D$ & $\{y\} \cup D$\\
    $(3,2)$ & $\{x,y\} \cup D$ & $\{x,y\} \cup D$ & $\{x,y\} \cup D$\\\hline
    \text{outcome} & $\{x\} \cup D$ & $\{y\} \cup D$ & ?\\\bottomrule
    \end{tabular}
\end{table}

\section{Implications}
\label{sec:applications}
We now discuss some implications of our impossibility result in other settings. We begin by answering a question by~\citet{CKM10} about the ``approximability'' of the minimax approval voting rule. Then, we discuss three problems, which are just generalizations of multi-winner approval-based voting (even though some may seem different at first glance).

\subsection{Approximating minimax approval voting}
\label{sec:minimax}
Minimax approval voting has been proposed by~\citet{BKS07}, who argued about its use in comparison to the minisum rule that is mostly adopted in practice. In particular, the $k$-minimax approval voting rule takes as input a profile of approval ballots and returns a $k$-sized committee of alternatives that minimizes the maximum Hamming distance from the ballots. Unfortunately, as proved by~\citet{LMM07}, the rule has two important drawbacks. The first is that computing the winning committee is an NP-hard problem, and the second is that the rule is not SP.

As a relaxation, \citet{LMM07},~\citet{CKM10}, and~\citet{BS14} studied {\em approximate mechanisms} for $k$-minimax approval voting. Given an approval election $(N,A,P)$ and a $k$-sized committee $C$, let $D(C,P)=\max_{i\in N}{d(C,P_i)}$ denote the maximum Hamming distance of $C$ from the approval ballots of $P$. Also, let $C^*\in \argmin_{K}{D(K,P)}$ be a $k$-sized committee that would be returned by the $k$-minimax approval voting rule on input the election $(N,A,P)$. A voting rule $R$ approximates the $k$-minimax solution within a factor of $\rho\geq 1$, if $D(R(P),P)\leq \rho\cdot D(C^*,P)$ for every profile $P$. The quantity $\rho$ is called the approximation ratio of mechanism $R$.

The papers by~\citet{LMM07} and~\citet{CKM10} present SP and weakly GSP voting rules, which approximate $k$-minimax approval voting within a factor of $3-\frac{2}{k+1}$. Among them, the minisum rule with appropriate tie-breaking is SP, whereas the class of $k$-completion voting rules, presented in Section \ref{sec:impossibility}, is weakly GSP. The question of whether such an approximation factor can be achieved by a strongly GSP mechanism was left open. 

To resolve this open question using our impossibility result, we make first the following claim. 
\begin{claim}
\label{cl:non-unan}
A non-unanimous voting rule has an infinite approximation ratio for $k$-minimax approval voting.
\end{claim}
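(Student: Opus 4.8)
The plan is to exhibit a single ``bad'' profile at which the $k$-minimax optimum costs zero while $R$ pays a strictly positive cost; since the approximation guarantee $D(R(P),P)\le \rho\cdot D(C^*,P)$ must hold for \emph{every} profile, such a profile forces $\rho$ to be infinite. The whole argument is driven by the definition of unanimity, so I would begin by unpacking its negation.

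First I would extract a witness from the hypothesis. Since $R$ is not unanimous, by the definition of unanimity there is a set $S\subseteq A$ with $|S|=k$ and a profile $P$ in which every agent approves exactly $S$ (and nothing else), yet $R(P)=C$ for some committee $C$ with $|C|=k$ and $C\neq S$. Next I would compute the two relevant minimax costs on this profile. The committee $S$ satisfies $d(S,P_i)=d(S,S)=0$ for every agent $i$, so $D(S,P)=0$ and hence the optimal value is $D(C^*,P)=0$. For the rule's own output, note that $C$ and $S$ both have size $k$ but $C\neq S$, so $|C\cap S|\le k-1$ and therefore $d(C,S)=2k-2|C\cap S|\ge 2>0$; as every agent approves exactly $S$, this yields $D(R(P),P)=\max_{i}d(C,P_i)=d(C,S)>0$.

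Finally I would plug these values into the definition of the approximation ratio. For $R$ to approximate the $k$-minimax solution within a finite factor $\rho$, the inequality $D(R(P),P)\le \rho\cdot D(C^*,P)$ must hold in particular for the witness profile $P$, where it reads $d(C,S)\le \rho\cdot 0=0$. Since $d(C,S)>0$, this is violated for every finite $\rho$, so the approximation ratio of $R$ is infinite, as claimed.

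I do not expect a genuine obstacle here, as the statement follows directly from the definitions once the witness profile is identified; the only point needing (minor) care is verifying that the committee $C$ forced by non-unanimity really has positive distance from $S$, which is immediate because $C$ and $S$ share the same size $k$ and yet differ.
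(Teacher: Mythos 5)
Your proposal is correct and follows essentially the same argument as the paper: extract a unanimous witness profile from non-unanimity, observe the minimax optimum there is $0$ while the rule's output has strictly positive maximum distance, and conclude the ratio is unbounded. The only difference is that you spell out explicitly why $d(C,S)\geq 2$, a detail the paper treats as immediate.
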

\begin{proof}
Consider such a non-unanimous rule. Non-unanimity implies that there is a profile $P$ in which all agents approve the same $k$-sized committee of alternatives, say $C^*$, but the rule returns some different committee, say $C$. Clearly, $D(C^*,P)=0$, i.e., the minimax solution has a Hamming distance of $0$ to all voters, whereas $D(C,P)>0$ in this case. This means that the rule has an infinite approximation ratio. 
\end{proof}

The following statement now follows due to Theorem~\ref{thm:main}.
\begin{theorem}
Any strongly GSP approval-based multi-winner voting rule in elections with $m\geq 3$ alternatives, $k\in [m-2]$, and $n\geq 6$ agents has an infinite approximation ratio for $k$-minimax approval voting.
\end{theorem}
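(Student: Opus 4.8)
The plan is to combine Theorem~\ref{thm:main} with Claim~\ref{cl:non-unan} via a clean dichotomy on whether the rule is unanimous. The target statement asserts that \emph{every} strongly GSP multi-winner rule (in the stated parameter regime) has infinite approximation ratio for $k$-minimax, so I would argue by exhausting the two possibilities for such a rule and showing each forces an unbounded ratio.

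First I would take an arbitrary strongly GSP multi-winner approval-based voting rule $R$ for elections with $m\geq 3$ alternatives and $k\in[m-2]$, and split into cases according to unanimity. If $R$ is \emph{not} unanimous, then Claim~\ref{cl:non-unan} applies directly: there is a profile $P$ with $D(C^*,P)=0$ but $D(R(P),P)>0$, so the approximation ratio is infinite. This case is immediate and needs no new work. If instead $R$ \emph{is} unanimous, then $R$ is simultaneously unanimous and strongly GSP; but Theorem~\ref{thm:main} says no such rule exists for $k\in[m-2]$. Hence this case is vacuous — there is simply no rule to consider — and every strongly GSP rule must fall into the non-unanimous case, where the infinite ratio already holds.

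The only genuinely new ingredient is checking that the agent-count hypothesis $n\geq 6$ is what makes the invocation of Theorem~\ref{thm:main} legitimate, rather than a free assumption. In the final step of the proof of Theorem~\ref{thm:main}, the reduction is instantiated with $n=3$ ranking-based agents, each of which spawns $m-1\geq 2$ copies in the approval election (since $m\geq 3$), giving at least $3\cdot 2 = 6$ approval agents; the profiles $P_x,P_y,P_{xy}$ in Table~\ref{tab:example} indeed use six agents. So I would note explicitly that the impossibility of Theorem~\ref{thm:main} is witnessed already with $n=6$ agents, which is exactly why the present statement requires $n\geq 6$: for fewer agents the construction underlying Theorem~\ref{thm:main} need not go through, so one cannot rule out unanimous strongly GSP rules, and the argument would break.

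I do not expect a serious obstacle here, since the heavy lifting is done by Theorem~\ref{thm:main} and Claim~\ref{cl:non-unan}. The one point requiring care is the logical structure: the conclusion is not that a unanimous rule has bad ratio, but that the unanimous case cannot arise at all for strongly GSP rules, so the infinite ratio is delivered entirely through the non-unanimous branch. I would make sure the write-up states this cleanly — that \emph{any} strongly GSP rule is necessarily non-unanimous by Theorem~\ref{thm:main}, and then Claim~\ref{cl:non-unan} finishes the job — rather than leaving the impression that two independent bad-ratio arguments are needed.
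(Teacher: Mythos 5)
Your proposal is correct and is essentially identical to the paper's own proof: the paper likewise deduces from Theorem~\ref{thm:main} that a strongly GSP rule in this parameter regime cannot be unanimous, and then invokes Claim~\ref{cl:non-unan} to conclude the approximation ratio is infinite. Your dichotomy phrasing and your remark that the construction behind Theorem~\ref{thm:main} is witnessed with six approval agents (explaining the hypothesis $n\geq 6$) match the paper's reasoning, so no new ingredients are needed.
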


\subsection{Participatory budgeting with an approval voting format}
An important application of approval-based voting is in {\em participatory budgeting}~\cite{AS21,RM23}. In the most typical scenario, a municipality has a fixed budget and considers implementing several projects. Each project has a known cost, but the total cost of all projects exceeds the available budget. Thus, only a subset of projects can be selected for implementation. Participatory budgeting is used to delegate the selection of projects to the citizens. Each participating citizen casts a ballot with their preferences, and the municipality has to select, using a predefined mechanism, a subset of projects that are within the budget and reflect the citizens' preferences. Among other formats that have been used for eliciting the citizens' preferences, approval ballots seem to be very popular, not only within municipalities, but also in programs offered by blockchain communities or decentralized autonomous organizations (DAOs); e.g., see the recent related survey by~\citet{T23}.

Let us focus now on the case of projects with equal costs, so that for some integer $k$, all subsets of $k$ projects are the maximal sets of projects that are within the budget\footnote{In fact, for our purposes, the project costs do not need to be the same, as long as any set of $k$ projects is budget feasible, and any superset is infeasible.}. This setting can be easily seen to be equivalent to multi-winner approval voting. In particular, we say that a mechanism is maximal if it always returns a maximally feasible set of projects for funding (the addition of any other project violates the budget constraint). For the case we are considering, a maximal mechanism would have to select a set of exactly $k$ projects (which can be viewed as a committee of size $k$ in the approval voting election).  
Assuming that each citizen aims to have as many of her favourite projects selected, our impossibility result implies that unanimity and strong group-strategyproofness are not compatible for participatory budgeting mechanisms that are maximal.

\begin{theorem}\label{thm:pb}
Unanimous and strongly GSP maximal mechanisms for participatory budgeting with an approval ballot format  do not exist.
\end{theorem}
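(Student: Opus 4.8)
The plan is to reduce Theorem~\ref{thm:pb} directly to the main impossibility result, Theorem~\ref{thm:main}, by exhibiting an exact equivalence between the participatory budgeting setting under consideration and multi-winner approval-based voting. First I would make the correspondence precise: in the restricted budgeting instance, each project is an alternative, the budget constraint guarantees (by the equal-cost assumption, or more generally by the footnote's assumption that every size-$k$ set is feasible while every superset is infeasible) that the maximal feasible sets of projects are exactly the subsets of $A'$ of size $k$. A maximal mechanism must therefore always output a committee of size exactly $k$, so it \emph{is} a multi-winner approval-based voting rule in the sense of Section~\ref{sec:impossibility}, and conversely every such rule is a maximal budgeting mechanism.

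The second step is to verify that the strategic model matches. Since each citizen aims to have as many of her favourite projects funded as possible, her preference over committees is exactly governed by the number of her approved projects included, equivalently by the Hamming distance between her approval set and the selected committee, which is precisely the preference model of Section~\ref{sec:impossibility}. Hence the definitions of strong group-strategyproofness and unanimity transfer verbatim between the two settings: a budgeting mechanism is unanimous (respectively strongly GSP) if and only if the corresponding multi-winner approval-based voting rule is unanimous (respectively strongly GSP).

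Given this equivalence, I would conclude as follows. Suppose, for contradiction, that a unanimous and strongly GSP maximal participatory budgeting mechanism exists for some instance with $m\geq 3$ projects and feasibility threshold $k\in[m-2]$. By the correspondence above it induces a unanimous and strongly GSP multi-winner approval-based voting rule for an approval election with the same $m$ and $k$, directly contradicting Theorem~\ref{thm:main}. Therefore no such mechanism exists.

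I do not expect a serious obstacle in this proof, since the bulk of the work is the reduction already carried out for Theorem~\ref{thm:main}; the only point requiring care is to state the equivalence cleanly. Specifically, I would be careful to invoke the footnote's weaker hypothesis (every set of $k$ projects budget-feasible, every superset infeasible) rather than strict equal costs, so that maximality genuinely forces outcomes of size exactly $k$ and the translation to committee size $k$ is exact. With that detail handled, the theorem follows immediately from Theorem~\ref{thm:main}.
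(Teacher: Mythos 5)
Your proposal is correct and is essentially identical to the paper's own argument: the paper also observes that under the equal-cost (or, per the footnote, the weaker ``every size-$k$ set feasible, every superset infeasible'') assumption, a maximal mechanism is exactly a multi-winner approval-based voting rule with committee size $k$, that the citizens' preferences coincide with the Hamming-distance model, and that Theorem~\ref{thm:main} then applies directly. Your extra care in distinguishing the footnote's weaker hypothesis from strict equal costs matches the paper's own remark and adds nothing beyond what it intends.
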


We remark that Theorem~\ref{thm:pb} applies to mechanisms which allow citizens to cast ballots with arbitrarily many projects, even if their total cost exceeds the available budget. This is not an unusual practice today.\footnote{The website \url{https://en.wikipedia.org/wiki/List_of_participatory_budgeting_votes} has information about the format of participatory budgeting votes held in the last ten years.} Whether Theorem~\ref{thm:pb} carries over for inputs, where each agent's ballot respects the budget restriction is an interesting open problem.

\subsection{Classification with shared inputs}
An important problem in {\em supervised learning} is, given a set of training examples that maps points from an input space to labels, to select a classifier from a concept class that is as close to the data in the training example as possible. \citet{MPR12} study strategic issues in {\em binary classification} by considering the following multi-agent setting with an input space $\mathcal{X}$, a concept class $\mathcal{H}$ and $n$ agents. The concept class contains classifiers; each classifier $h\in \mathcal{H}$ is a function mapping each point of $\mathcal{X}$ to a label from $\{+,-\}$. The training set consists of a set of points $X\subseteq\mathcal{X}$ and data provided by $n$ agents, where agent $i$ provides a labeling $Y_i:X\rightarrow \{+,-\}$ of the points in $X$.

A classifier is evaluated by each agent $i\in [n]$ using the agent's loss function $\ell_i$, which returns the number of data points for which the labeling of the classifier and the labeling provided by agent $i$ differ. i.e., for the classifier $h\in \mathcal{H}$ and agent $i$,
\begin{align*}
    \ell_i(h,Y_i) &= \sum_{x\in X}{\mathbb{I}\{h(x)\not=Y_i(x)\}}.
\end{align*}
Furthermore, each agent $i\in [n]$ has a positive weight $w_i$, so that $\sum_{i\in [n]}{w_i}=1$, indicating the importance of the agent for the classification task. A classification mechanism takes as input the dataset $Y=(Y_1, Y_2, ..., Y_n)$, which consists of the labelings provided by the agents, along with their weight vector $\mathbf{w}=(w_1, w_2, ..., w_n)$, and returns a classifier from the concept class $\mathcal{H}$. The outcome of the classification mechanism, say $h\in \mathcal{H}$, is evaluated using the global risk $L(h,Y,\mathbf{w})$, defined as
%For dataset $Y=(Y_1, ..., Y_n)$ and agent weight vector $\mathbf{w}=(w_1, ..., w_n)$, the global risk $L(h,Y,\mathbf{w})$ of classifier $h\in \mathcal{H}$ is defined as
\begin{align*}
    L(h,Y,\mathbf{w}) &= \sum_{i\in [n]}{w_i\cdot \ell_i(h,Y_i)}.
\end{align*}
The well-known {\em empirical risk minimization} (ERM) algorithm selects a classifier minimizing the global risk, i.e., 
\begin{align*}
    \text{erm}(Y,\mathbf{w}) &\in \argmin_{h\in \mathcal{H}}{L(h,Y,\mathbf{w})}.
\end{align*}

\citet{MPR12} assume that agents are strategic and may decide to misreport their private data to the classification mechanism if the outcome of the mechanism incurs a smaller loss to them. Adapting the notions of strategyproofness and strong group-strategyproofness to this setting, we say that a classification mechanism $M$ taking as input the weight vector $\mathbf{w}$ and the private labelings of the agents, $Y=(Y_1, ..., Y_n)$, for the set of points $X\subseteq \mathcal{X}$, is SP if there is no agent $i\in [n]$ and labeling $Y'_i:X\rightarrow \{+,-\}$ so that $\ell_i(M(Y,\mathbf{w}),Y_i) > \ell_i(M((Y_{-i},Y'_i),\mathbf{w}),Y_i)$. The mechanism is strongly GSP if there is no subset of agents $S\subseteq [n]$, agent $i^*\in S$, and labelings $Y'_i:X\rightarrow \{+,-\}$ for $i\in S$, so that $\ell_i(M(Y,\mathbf{w}),Y_i)\geq \ell_i(M((Y_{-S},Y'_S),\mathbf{w}),Y_i)$ for each $i\in S$ and $\ell_{i^*}(M(Y,\mathbf{w}),Y_{i^*}) > \ell_{i^*}(M((Y_{-S},Y'_S),\mathbf{w}),Y_{i^*})$. 
%Here, $(Y_{-i},Y'_i)$ (respectively, $(Y_{-S},Y'_S)$) denotes the reported dataset consisting of the private labelings of all agents besides $i$ (respectively, besides the agents in $S$) and labeling $Y'_i$ by agent $i$ (respectively, by every agent $i\in S$). 
A universally SP (resp., strongly GSP) randomized mechanism is a probability distribution over deterministic SP (resp., deterministic strongly GSP) mechanisms.

\citet{MPR12} observe that ERM is not SP and aim to design SP mechanisms which approximate the global risk of ERM. A (possibly randomized) $\rho$-approximate mechanism $M$ in this context satisfies $\E[L(M(Y,\mathbf{w}),Y,\mathbf{w})]\leq \rho\cdot L(\text{erm}(Y,\mathbf{w}),Y,\mathbf{w})$ for every dataset $Y$. They present an $O(n)$-approximate SP mechanism, 
%(essentially returning the classifier of minimum loss for the agent with the highest weight), 
which is also proved to be almost best possible among all deterministic SP mechanisms. A randomized version of this mechanism, which first selects an agent proportionally to the agents' weights and then selects the classifier of minimum loss to the selected agent, is universally SP and has an approximation ratio of $3-\frac{2}{n}$. This bound was proved to be best possible for randomized SP mechanisms by~\citet{MAMR11}.

By exploiting the relation of binary classification with shared inputs to approval voting and our impossibility result, we can show a negative result as well. Consider a set of $m$ points $X\subseteq \mathcal{X}$ and the concept class $\mathcal{H}$ consisting of the classifiers that label $k$ points of $X$ with $+$ and the remaining with $-$, where $k\in [m-2]$. Such classifiers can make sense in scenarios where there is a fixed number of points that can receive one of the two labels, due to capacity constraints. For example, classifiers for school admission or bank loan applications fit under this framework (where only a predetermined number of the applicants will be admitted). Another natural example can be seen with taking $k=n/2$, where the goal is to separate the points into an upper and lower half. 
We can then think of each point as an alternative, classifiers as $k$-sized committees (containing the alternatives that the classifier labels with $+$), and input data provided by the agents as approval ballots, where the approval set corresponds to the alternatives labeled as $+$ by the agent. Notice that under this definition, the loss $\ell_i(h,Y_i)$ of agent $i$ for classifier $h$ is the Hamming distance between the $k$-sized committee corresponding to the classifier $h$ and the approval ballot corresponding to agent $i$. Thus, classification for this particular concept class with agents aiming at minimizing their loss is equivalent to multi-winner approval voting with agents aiming at minimizing their Hamming distance to the winning committee.

We will use now an argument similar to Claim \ref{cl:non-unan} in Section \ref{sec:minimax}, so as to obtain a negative result that extends also for randomized mechanisms. Consider a universally strongly GSP randomized mechanism $M$ which has a strongly GSP deterministic mechanism $M_R$ (corresponding to the multi-winner voting rule $R$) in its support, i.e., it calls mechanism $M_R$ with positive probability. Then, Theorem~\ref{thm:main} implies that $R$ is not unanimous, and hence, there is an approval ballot profile in which all agents agree on $k$ alternatives, but the rule $R$ returns another $k$-sized committee. Equivalently, this means that there exists a dataset $Y=(Y_1, ..., Y_1)$ consisting of $n$ copies of the same labeling $Y_1$ by all agents, so that when the classification mechanism $M_R$ takes as input $Y$, it returns a classifier $h$, producing a labeling that is different from $Y_1$, i.e., $\ell_i(M_R(h,Y,\mathbf{w}),Y_i)>0$ and $L(M_R(Y,\mathbf{w}),Y,\mathbf{w})=\sum_{i\in [n]}{w_i\cdot \ell_i(M_R(Y,\mathbf{w}),Y_1)}>0$ and hence, $\E[L(M(Y,\mathbf{w}),Y,\mathbf{w})]>0$. In contrast, the concept class $\mathcal{H}$ contains the classifier $h^*$ which agrees with the labeling $Y_1$ on the points of $X$ and thus, $L(\text{erm}(Y,\mathbf{w}),Y,\mathbf{w})=0$. Therefore, we obtain that mechanism $M$ has infinite approximation ratio.

\begin{theorem}\label{thm:classification}
    Any universally strongly GSP randomized mechanism for binary classification with shared inputs has an infinite approximation ratio.
\end{theorem}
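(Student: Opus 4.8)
The plan is to combine the equivalence between this restricted classification setting and multi-winner approval voting (already established in the preceding paragraph) with the main impossibility result, Theorem~\ref{thm:main}, and then lift the statement from deterministic to randomized mechanisms using an argument in the spirit of Claim~\ref{cl:non-unan}. The key observation is that a universally strongly GSP randomized mechanism is, by definition, a distribution over deterministic strongly GSP mechanisms, so it suffices to show that \emph{every} deterministic strongly GSP mechanism in its support already incurs infinite approximation ratio on some common instance; then the randomized mechanism inherits positive expected loss on an instance where $\mathrm{erm}$ achieves zero loss.

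First I would fix a deterministic strongly GSP mechanism $M_R$ in the support of $M$, and invoke the translation between classifiers labeling exactly $k$ of the $m$ points with $+$ and $k$-sized committees, under which agent $i$'s loss $\ell_i(h,Y_i)$ equals the Hamming distance $d$ between the committee and the agent's approval ballot. This identification makes $M_R$ correspond to a multi-winner approval-based voting rule $R$, and the strong group-strategyproofness of $M_R$ (for coalitions in the classification sense) matches exactly the strong GSP property of $R$. Since $k\in[m-2]$ and $m\geq 3$, Theorem~\ref{thm:main} applies and tells us that $R$ cannot be both strongly GSP and unanimous; as $R$ is strongly GSP, it must fail unanimity.

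Next I would unpack what non-unanimity yields: there is an approval profile in which all agents approve the same $k$-sized committee $C^*$, yet $R$ returns a different committee $C\neq C^*$. Translating back, this is a dataset $Y=(Y_1,\dots,Y_1)$ consisting of $n$ identical labelings $Y_1$ (the one corresponding to $C^*$), on which $M_R$ outputs a classifier $h$ whose labeling differs from $Y_1$. Hence $\ell_i(h,Y_i)=d(C,C^*)>0$ for each $i$, so the global risk $L(M_R(Y,\mathbf{w}),Y,\mathbf{w})=\sum_{i\in[n]} w_i\,\ell_i(h,Y_1)>0$ since the weights are positive and sum to one. Meanwhile the concept class contains the classifier $h^*$ agreeing with $Y_1$ on every point of $X$ (namely the one for $C^*$), giving $L(\mathrm{erm}(Y,\mathbf{w}),Y,\mathbf{w})=0$.

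Finally I would conclude the randomized statement: because $M_R$ lies in the support of $M$ with positive probability and contributes strictly positive loss on this instance $Y$, while all other deterministic components contribute nonnegative loss, we get $\E[L(M(Y,\mathbf{w}),Y,\mathbf{w})]>0$ whereas $L(\mathrm{erm}(Y,\mathbf{w}),Y,\mathbf{w})=0$. No finite $\rho$ can satisfy $\E[L(M(Y,\mathbf{w}),Y,\mathbf{w})]\leq\rho\cdot L(\mathrm{erm}(Y,\mathbf{w}),Y,\mathbf{w})=0$, so $M$ has infinite approximation ratio, as claimed. The main subtlety to get right is the careful verification that the coalition-deviation structure in the classification model coincides with that of multi-winner approval voting, so that the strongly GSP property transfers faithfully and Theorem~\ref{thm:main} can legitimately be invoked on $R$; the remaining steps are routine.
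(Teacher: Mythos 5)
Your proposal is correct and takes essentially the same approach as the paper: fix a deterministic strongly GSP mechanism $M_R$ in the support, use the classification/approval-voting equivalence to view it as a strongly GSP multi-winner rule $R$, invoke Theorem~\ref{thm:main} to conclude $R$ fails unanimity, and then the dataset of $n$ identical labelings gives positive expected risk for $M$ while $\mathrm{erm}$ achieves zero risk, yielding an infinite ratio. The paper's argument is exactly this (its Claim~\ref{cl:non-unan}-style step included), so there is nothing to add.
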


We remark that we have silently assumed that the classification mechanism may take as input datasets that are not necessarily {\em realizable}. This means that the labeling of the points in $X$ provided by some agent may not coincide with the labeling of any classifier from the concept class. This is crucial for proving Theorem~\ref{thm:classification}. If the labeling provided by each agent is realizable by a classifier in the concept class, then the two mechanisms presented by~\citet{MPR12} are strongly GSP and universally strongly GSP, respectively. A realizable dataset corresponds to an approval voting setting in which the outcome can be {\em any} committee of alternatives. Clearly, any dictatorship is strongly GSP in this setting. For realizable datasets, the mechanisms of~\citet{MPR12} correspond to such dictatorships.

\subsection{Constrained facility location in networks}
Facility location has played a key role in the field of approximate mechanism design with money~\cite{PT13}. The simplest version of the problem aims to locate a single facility on the line when strategic agents report their private location and have {\em single-peaked} preferences. These results usually exploit a famous result by~\citet{M80}, which characterizes the class of SP mechanisms. The results on the line have been extended to multiple dimensions~\cite{BGS93}, multiple facilities~\cite{LSWZ10}, facility location in networks \cite{SV02,AFPT10}, and have addressed constrained versions of the problem; e.g., see~\citet{KVZ19}, \citet{SB15}, and \citet{W21}. 

We focus on the constrained version as well, regarding the allowed locations for placing a facility, and explain how our main result implies an impossibility in discrete metric spaces. In particular, the problem we consider is how to locate a facility at a node of a graph, taking into account reports by agents for their private locations in the graph. Our objective is to locate a facility at a node selected from a predefined subset of allowable nodes so that the maximum or the total shortest-path distance of the facility from all agents is minimized. A strongly GSP mechanism here means that if misreporting by any coalition of agents results in an agent from the coalition coming closer to the facility, some other agent in the coalition will be strictly further off. We evaluate the outcome of such a mechanism in terms of the maximum or the total shortest-path distance compared against the optimal solution. We have the following negative result.

\begin{theorem}\label{thm:facility}
    Strongly GSP mechanisms for constrained facility location in graphs has infinite approximation ratio with respect to both the maximum and the total cost objectives.
\end{theorem}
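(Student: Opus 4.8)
The plan is to realize multi-winner approval voting as a special case of constrained facility location, so that Theorem~\ref{thm:main} transfers directly via the non-unanimity argument already used in Claim~\ref{cl:non-unan}. Concretely, I would take the underlying graph to be the Hamming cube on $m$ coordinates: its nodes are all subsets of a ground set $A$ of size $m$ (equivalently, all binary vectors of length $m$), and two nodes are adjacent exactly when the corresponding subsets differ in a single element. The crucial property is that the shortest-path distance between two nodes of this graph coincides with the Hamming distance between the associated subsets. I would then declare the predefined set of allowable facility locations to be precisely the nodes corresponding to subsets of size $k$ (the feasible committees), while agents may report any node of the graph as their private location.

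Under this identification, an agent whose private location is the node for a set $P_i$ incurs shortest-path cost to a facility placed at the committee $S$ equal to $d(P_i,S)$. Hence the max-cost objective is exactly the $k$-minimax objective $\max_{i} d(S,P_i)$, and the total-cost objective is the corresponding minisum objective $\sum_{i} d(S,P_i)$. A coalitional misreport in the facility-location instance is nothing but a joint change of approval ballots, and the per-agent costs agree with the Hamming distances used in the voting definitions; therefore a facility-location mechanism is strongly GSP if and only if the induced multi-winner approval-based voting rule is strongly GSP. This makes the two settings equivalent for our purposes.

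With the equivalence in place, the remainder mirrors Claim~\ref{cl:non-unan}. By Theorem~\ref{thm:main}, for $m\geq 3$ and $k\in[m-2]$ no strongly GSP multi-winner approval-based voting rule can be unanimous, so the induced rule admits a profile in which every agent reports the same size-$k$ set $C^*$ (so every agent sits at the node for $C^*$), yet the mechanism outputs some committee $S\neq C^*$. Placing the facility at $C^*$, which is itself an allowable node, yields zero shortest-path distance to every agent, so the optimum is $0$ under both the maximum and the total objective. The mechanism's output $S$ satisfies $d(C^*,S)>0$, so every agent is at positive distance from $S$, giving positive max cost and positive total cost. Since the optimum is $0$, no finite factor can bound the ratio for either objective, establishing Theorem~\ref{thm:facility}.

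The main thing to get right is the reduction itself rather than any single estimate. I must verify that the constructed metric is genuinely a graph shortest-path metric equal to the Hamming distance (immediate for the hypercube), that the allowable-location constraint is purely a restriction on where the facility may be placed and leaves the graph, and hence all pairwise distances, intact, and that the strong-GSP notions line up agent-for-agent. A secondary point is parameter bookkeeping: the non-unanimity profile inherited from Theorem~\ref{thm:main} fixes the required number of agents (and thus the nodes used as agent locations), and one should note that allowing agents to report \emph{arbitrary} nodes, not only size-$k$ ones, plays the same enabling role here that non-realizable inputs played in the classification application of Theorem~\ref{thm:classification}.
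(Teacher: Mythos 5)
Your proposal is correct and matches the paper's own proof essentially step for step: the paper also uses the $m$-dimensional hypercube with the allowable set being the nodes with exactly $k$ ones, identifies shortest-path distance with Hamming distance so that the facility-location mechanism induces a multi-winner approval-based voting rule with the same strategic structure, and then invokes Theorem~\ref{thm:main} to get non-unanimity and hence a zero-optimum instance on which the mechanism incurs positive cost under both objectives. No substantive differences to report.
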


\begin{proof}
Again, we resort to an argument analogous to Claim \ref{cl:non-unan}. Consider instances, where the graph is the $m$-dimensional hypercube and all nodes with $k$ 1s ($k\in [m-2]$) in their binary representation is the allowable set of nodes for placing the facility. Then, the $m$ dimensions correspond to alternatives, each node in the allowable set corresponds to a $k$-sized committee, and each agent's location in the graph corresponds to an approval ballot. Clearly, the shortest-path distance between two nodes in the hypercube is equal to the Hamming distance of the corresponding approval ballots to committees. Theorem~\ref{thm:main} then implies that every strongly GSP mechanism for locating the facility is non-unanimous. Hence, there exists an instance in which all agents have the same preference, i.e., they are all located in the same node of the allowable set, but the mechanism returns another node of the allowable set as an outcome. Hence, the optimal solution has zero cost, both for the maximum and for the total cost objective, while the mechanism returns a solution with positive cost. 
\end{proof}

Clearly, it is again the restriction of the allowable set of nodes that leads to the impossibility. When the allowable set contains all graph nodes, dictatorships are strongly GSP and have a finite approximation ratio.

\section{Discussion}
Let us conclude by mentioning very briefly the few cases that escape from our impossibility. First, notice that our construction has at least six agents. On the other hand, we can easily see that the serial dictatorship mechanism ---which returns a $k$-sized committee that has minimum distance from agent $2$ among the $k$-sized committees of minimum distance from agent $1$--- is strongly GSP for approval elections with two agents. The case of three, four, and five agents is left open. Furthermore, our construction creates profiles with $m\geq 3$ alternatives and uses committee size $k\in [m-2]$. Also, recall that, for two alternatives, \citet{M12} provides a positive result with his consensus voting rules. So, the case $m\geq 3$ and $k=m-1$ is left mysteriously open. Exploring whether our impossibility carries over even for the simplest among these cases with three alternatives and committees of size $2$ would require a different construction. Of course, the possibility of a positive result here should not be excluded.

%\nocite*
\bibliographystyle{ACM-Reference-Format}
\bibliography{sample}

\end{document}